\newtheorem{proof}{Proof}
\begin{document}
\newtheorem{lemma}{Lemma}
\newtheorem{corol}{Corollary}
\newtheorem{theorem}{Theorem}
\newtheorem{proposition}{Proposition}
\newtheorem{definition}{Definition}
\newcommand{\e}{\begin{equation}}
\newcommand{\ee}{\end{equation}}
\newcommand{\eqn}{\begin{eqnarray}}
\newcommand{\eeqn}{\end{eqnarray}}
\newenvironment{shrinkeq}[1]
{ \bgroup
\addtolength\abovedisplayshortskip{#1}
\addtolength\abovedisplayskip{#1}
\addtolength\belowdisplayshortskip{#1}
\addtolength\belowdisplayskip{#1}}
{\egroup\ignorespacesafterend}
\title{\LARGE Over-Sampling Codebook-Based Hybrid Minimum Sum-Mean-Square-Error Precoding for Millimeter-Wave 3D-MIMO}

\author{Jiening Mao, Zhen Gao, Yongpeng Wu, and Mohamed-Slim Alouini 

\thanks{J. Mao and Z. Gao are with both Advanced Research Institute of Multidisciplinary Science (ARIMS) and School of Information and Electronics, Beijing Institute of Technology, Beijing 100081, China (E-mail: gaozhen16@bit.edu.cn).}
\thanks{Y. Wu is with Department aof Electronic Engineering, Shanghai Jiao Tong University, Shanghai 200240, China.}
\thanks{M. -S. Alouini is with
King Abdullah University of Science and Technology (KAUST), Thuwal, Saudi Arabia.}
}

\maketitle
\begin{abstract}

Hybrid precoding design is challenging for millimeter-wave (mmWave) massive MIMO.
Most prior hybrid precoding schemes are designed to maximize the sum spectral efficiency (SSE), while seldom investigate the bit-error-rate (BER).
Therefore, this letter designs an over-sampling codebook (OSC)-based hybrid minimum sum-mean-square-error (min-SMSE) precoding to optimize the BER,
where multi-user transmission with multiple data streams for each user is considered. 
Specifically, given the effective baseband channel consisting of the real channel and analog precoding, we first design the digital precoder/combiner based on min-SMSE criterion to optimize the BER. To further reduce the SMSE between the transmit and receive signals, we propose an OSC-based joint analog precoder/combiner (JAPC) design.
Simulation results show that the proposed scheme can achieve the better performance than its conventional counterparts.

\end{abstract}
 \vspace*{-1mm}
\begin{IEEEkeywords}
Multi-user MIMO, mmWave, hybrid precoding, 3D-MIMO, over-sampling codebook (OSC), massive MIMO.
\end{IEEEkeywords}
 \vspace*{-2.0mm}
\IEEEpeerreviewmaketitle
 \vspace*{-2.0mm}
\section{Introduction}
 \vspace*{-1.0mm}
Millimeter-wave (mmWave) massive MIMO with hybrid precoding is a promising key technique for 5G \cite{{2},{R1}}. However, the hybrid precoding design can be challenging due to the non-convex constraints introduced by the analog precoding with modulus constraint \cite{{8}}.

Hybrid analog/digital precoding has been investigated in \cite{3,4,5,6,8,R2}. Specifically,
a compressive sensing based hybrid precoding and an energy efficient hybrid precoding have been proposed in \cite{8} and \cite{{R2}}, respectively, while they are limited to single-user MIMO systems.
A phased zero forcing hybrid precoding has been proposed for multi-user mmWave MIMO \cite{{3}}, while only single-antenna user is considered.
Moreover, to support multi-user mmWave MIMO with each user equipped with multiple antennas and single RF chain,
a two-stage multi-user hybrid precoding and a heuristic hybrid precoding have been proposed in \cite{{4}} and \cite{{5}}, respectively.
However, how to effectively support multi-user and multi-stream transmission for each user is not well addressed.
Recently, a hybrid block diagonalization (BD) precoding is developed from the traditional BD precoding in \cite{{6}}, which can be used to support multi-user massive MIMO with each user equipped with multiple antennas and multiple RF chains.
On the other hand, most prior work \cite{{3},{4},{5},{6},{8},{R2}} have designed the analog precoding to maximize the sum spectral efficiency (SSE) other than to optimize the bit-error-rate (BER) performance, and they usually consider the ideal uniform linear array (ULA) other than the practical uniform planar array (UPA).

In this letter, we propose an over-sampling codebook (OSC)-based
hybrid minimum sum-mean-square-error (min-SMSE) precoding scheme for mmWave multi-user three-dimensional (3D)-MIMO systems to optimize the BER.
Specifically, given the effective baseband channel consisting of the real channel and analog precoding, we first design the digital precoder/combiner to minimize the SMSE between the transmit symbols and receive symbols.
Furthermore, we propose an OSC-based joint analog precoder/combiner (JAPC) design by further reducing the SMSE for improved BER.
Simulation results confirm that the proposed scheme has the better performance than the conventional schemes.
To the best of our knowledge, this is the first work to jointly design digital and analog precoder/combiner for mmWave massive multiuser MIMO based on minimum SMSE criterion.


\textit{Notations}: 
Lower-case and upper-case boldface letters denote vectors and matrices, respectively;
${\left(  \cdot  \right)\!^T}$, ${\left( \cdot \right)\!^H}$, and ${\mathop{\rm tr}\nolimits} \left(  \cdot  \right)$ denote the transpose, conjugate transpose, and trace of a matrix, respectively; ${\left(  \cdot  \right)^{(i)}}$ and ${\left(  \cdot  \right)\!_{i,j}}$ represent the ${i}$-th column and ${i}$-th row and ${j}$-th column element of a matrix, respectively; ${\bf{I}}_N$ represents the ${N \!\times \!N}$ identity matrix; ${\left\| {\cdot} \right\|_2}$, ${\left\| {\cdot} \right\|_{\rm F}}$, ${\mathop{\rm diag}\nolimits} \left(  \cdot  \right)$, ${\mathop{\rm angle}\nolimits} \left(  \cdot  \right)$, and ${{\mathbb{E}}\left[  \cdot  \right]}$ represent the ${\ell _2}$-nrom, Frobenius norm, diagonalization, phase extraction, and expectation operators, respectively.

\vspace*{-2mm}
\section{System Model}
We consider the downlink mmWave multi-user 3D-MIMO system with the hybrid precoding.
Assume that
the base station (BS) employs ${N_t}$ antennas but ${M_t} \ll {N_t}$ RF chains to simultaneously serve ${K}$ users, and each user employs $N_r$ antennas but ${M_r} \ll {N_r}$ RF chains to support ${N_s} \le {M_r}$ data streams.
Therefore, the BS can simultaneously support $KN_s\le M_t$ data streams.
In the downlink transmission with $KM_r = M_t$, the received signal at the $k$-th user can is
\begin{equation}
\label{equ:eq0}
{{\bf{y}}_k} = {\bf{V}}_k^H{\bf{M}}_k^H\left( {\gamma} {{{\bf{H}}_k}{\bf{FWx}} + {{\bf{n}}_k}} \right),
\end{equation}
where ${{\bf{V}}_k}\in {{\mathbb{C}}^{ {M_r}\times{N_s} }}$ and ${{\bf{M}}_k}\in {{\mathbb{C}}^{{N_r} \times {M_r}}}$ represent the digital combiner and analog combiner of the ${k}$-th user, respectively,
${{\bf{H}}_k} \in {{\mathbb{C}}^{{N_r} \times {N_t}}}$ is the downlink channel associated with the BS and the ${k}$-th user, ${\bf{F}} = [{{\bf{F}}_1},{{\bf{F}}_2}, \cdots ,{{\bf{F}}_K}]\in {{\mathbb{C}}^{ {N_t}\times{M_t} }}$ and ${\bf{W}} = [{{\bf{W}}_1},{{\bf{W}}_2}, \cdots ,{{\bf{W}}_K}]\in {{\mathbb{C}}^{ {M_t}\times{KN_s} }}$ represent
the analog precoder and digital precoder at the BS, respectively,
${{\bf{F}}_k}\in {{\mathbb{C}}^{{N_t} \times {M_r}}}$ and ${{\bf{W}}_k}\in {{\mathbb{C}}^{{M_t} \times {N_s}}}$ ${\left( {1 \le k \le K} \right)}$ represent the analog precoder and digital precoder for the ${k}$-th user, respectively.
${\gamma}$ is the normalization factor so that ${{\gamma ^2}}{\mathop{\rm tr}\nolimits}  \left({\bf{FW}}{{\bf{W}}^H}{{\bf{F}}^H} \right) = {P_t}$, and ${P_t}$ is the transmit power.
${\bf{x}} = {\left[ {{{\bf{x}}_1}^T,{{\bf{x}}_2}^T, \cdots ,{{\bf{x}}_K}^T} \right]^T}\in {{\mathbb{C}}^{ {KN_s}\times{1} }}$ represents  the transmitted signal vector for ${K}$ users.
${{\bf{n}}_k}\in {{\mathbb{C}}^{{N_r} \times 1}}$ is the additive white Gaussian noise (AWGN) vector at the $k$-th user.

The mmWave MIMO
channel matrix $\bf{H}$ is assumed to be a sum of the contributions of ${N_c}$ scattering clusters, and each cluster contains ${N_p}$ propagation paths \cite{{8}}, i.e.,
\begin{align}
\begin{array}{*{20}{l}}
\hspace{-2mm}
{{{\bf{H}}_k} = \sqrt {\dfrac{{{N_t}{N_r}}}{{{N_c}{N_p}}}} \sum\limits_{i = 1}^{{N_c}} {\sum\limits_{l = 1}^{{N_p}} {\alpha _{il}^k{\bf{a}}_r^k\left( {\theta _{il}^{rk},\phi _{il}^{rk}} \right){\bf{a}}_t^k{{\left( {\theta _{il}^{tk},\phi _{il}^{tk}} \right)}^H}} } },
\end{array}
\end{align}
where ${\alpha _{il}^k}$ is the complex gain of the ${i}$-th path in the ${l}$-th cluster and $\alpha _{il}^k \sim {\mathop{{\cal C}{\cal N}}\nolimits} \left(0,1\right)$.
${\theta _{il}^{rk}\left( {\phi _{il}^{rk}} \right)}$ and ${\theta _{il}^{tk}\left( {\phi _{il}^{tk}} \right)}$ are the azimuth (elevation) angles of arrival (AoAs) and departure (AoDs) of the ${i}$-th path in the ${l}$-th cluster, respectively.
The angles in each cluster follow the uniform distribution and have the constant angle spreads (standard deviation),
which can be  denoted by $\sigma _\phi ^{tk}$, $\sigma _\theta ^{tk}$, ${\sigma _\phi ^{rk}}$, and $\sigma _\theta ^{rk}$, respectively. ${{\bf{a}}_r^k\left( {\theta _{il}^{rk},\phi _{il}^{rk}} \right)}$ and ${{\bf{a}}_t^k\left( {\theta _{il}^{tk},\phi _{il}^{tk}} \right)}$ are the normalized receive and transmit array response vectors at an azimuth (elevation) angle ${\theta _{il}^{rk}\left( {\phi _{il}^{rk}} \right)}$ and ${\theta _{il}^{tk}\left( {\phi _{il}^{tk}} \right)}$, respectively.
\begin{figure}
     \centering
     \includegraphics[width=8.5cm, keepaspectratio]%
     {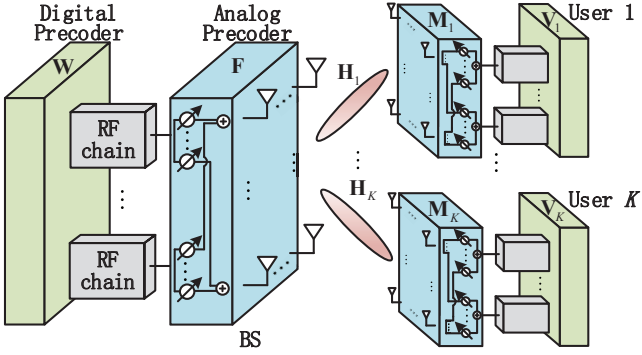}
     \vspace*{-1.5mm}
     \caption{The mmWave 3D-MIMO with hybrid precoding, where the BS can support multiple users with multi-stream transmission for each user.}
     \label{fig:1}
\end{figure}
 For UPA in the ${yz}$-plane with ${N_y}$ and ${N_z}$ elements on the ${y}$ and ${z}$ axes, respectively, the array response vector can be written as
\begin{align}
\label{equ:model1}
\hspace{-4.6mm}
~~{{{\bf{a}}_{{\rm{UPA}}}}\!\!\left( {\theta ,\phi } \right)} \!&= \!\frac{1}{{\sqrt {{N_y}{N_z}} }}\big[1, \cdots \!,{e^{j\frac{{2\pi }}{\lambda }d\left( {n\sin \left( \theta   \right)\!\cos \left( \phi \right) + m\sin \left( \phi  \right)} \right)}},\notag\\
 &\cdots ,{e^{j\frac{{2\pi }}{\lambda }d\left( {\left( {N_y - 1} \right)\sin \left( \theta \right)\cos \left( \phi   \right) + \left( {N_z - 1} \right)\sin \left( \phi   \right)} \right)}}\big]^T,
\end{align}
where ${1 \le n < N_y}$ and ${1 \le m < N_z}$ are the ${y}$ and ${z}$ indices of an antenna element, respectively. Finally, ${\lambda }$ and ${d}$ denote the wavelength and adjacent antenna spacing, respectively.

\section{OSC-Based Hybrid Min-SMSE Precoding}
In this section, we first propose a min-SMSE criterion based digital precoding design. To further improve the BER performance by reducing SMSE, we propose a JAPC design based on our proposed OSC in the analog part.

\subsection{Digital Precoding Design}
Assume that the ${k}$-th user's effective baseband channel ${\bf{H}}_{{\rm{eff}}}^k = {\bf{M}}_k^H{{\bf{H}}_k}{\bf{F}} \in {{\mathbb{C}}^{{M_r} \times {M_t}}}$ ${\left( {1 \le k \le K} \right)}$ is given, the estimated signal at the ${k}$-th user can be rewritten as
\begin{align}
{{{\bf{\hat x}}}_k} = {{\bf{y}}_k}/{\gamma} = {\bf{V}}_k^H{\bf{H}}_{{\rm{eff}}}^k{\bf{Wx}} + {\bf{V}}_k^H{\bf{M}}_k^H{{\bf{n}}_k}/{\gamma}.
\end{align}
Then the ${k}$-th user's the mean square error (MSE) can be expressed as ${\xi _k} = {\mathbb{E}}\left[ {\left\| {{{{\bf{\hat x}}}_k} - {{\bf{x}}_k}} \right\|_2^2} \right]$.
Moreover, since ${{\bf{x}}_k}$ and ${{\bf{n}}_k}$ are mutually independent,
$\xi _k$ can be further expressed as
\hspace{-4.6mm}
\begin{align}
\label{equ:eq2}
{\xi _k}\! &= \! {\rm{tr}} \! \left ({\bf{V}}_k^H{\bf{H}}_{{\rm{eff}}}^k{\bf{W}}{{\bf{W}}^H}\!{\left( {{\bf{H}}_{{\rm{eff}}}^k} \right)^H}{{\bf{V}}_k}\! \right)\!+ \!{\frac{{{\sigma ^2}}}{{{\gamma ^2}}}}{\rm{tr}}\!\left ({\bf{V}}_k^H{\bf{M}}_k^H{{\bf{M}}_k}\!{{\bf{V}}_k} \right) \notag\\
 &- {\rm{tr}}\left ({\bf{V}}_k^H{{\bf{H}}_{\rm eff}^k}{{\bf{W}}_k} \right) - {\rm{tr}}\left ({\bf{W}}_k^H{\left( {{\bf{H}}_{{\rm{eff}}}^k} \right)^H}{{\bf{V}}_k} \right)+ {N_s}.
\end{align}
In this way, the digital precoder/combiner design can be formulated as the following optimization problem
\begin{align}
\label{equ:digital1}
&\mathop {\min }\limits_{{\bf{W}},{\bf{V}}^H} \sum\nolimits_{k = 1}^K {{\xi _k}} \notag\\
&{\rm s.t.}\quad{{\gamma ^2}}{\mathop{\rm tr}\nolimits}  \left({\bf{FW}}{{\bf{W}}^H}{{\bf{F}}^H} \right) = {P_t},
\end{align}
where ${{\bf{V}}\!\!=\! {\mathop{\rm diag}\nolimits} \left({{{\bf{V}}_1},\!\!  {{\bf{V}}_2},\!  \cdots ,\!\!  {{\bf{V}}_K}} \right)}$ represents the digital combiner for $K$ users.
Moreover, since $K$ users will estimate their respective signals independently in the downlink, we can design the digital combiner for each user separately. Hence, the problem (\ref{equ:digital1}) can be decomposed as the subproblems ${\bf{V}}_k^H = \arg\mathop {\min\limits_{{\bf{V}}_k^H}{\xi _k}}$ for ${k = 1,2, \cdots ,K}$.
Consider the partial derivative of ${{\bf{V}}_k^H}$ with respect to ${\xi _k}$ and let it equal 0, we arrive at
\begin{align}
\label{equ:v}
\!\!\!\!\!\!{\bf{V}}_k^H\! \!  =\! \!  {\bf{W}}_k^H{\left( {{\bf{H}}_{\rm eff}^k} \right)^H}{\left ({{\bf{H}}_{\rm eff}^k}{\bf{W}}{{\bf{W}}^H}{\left( {{\bf{H}}_{\rm eff}^k} \right)^H}\! \!\! \!  +\!   {\frac{{{\sigma ^2}}}{{{\gamma ^2}}}}{\bf{M}}_k^H{{\bf{M}}_k} \right)^{\!\! - 1}}\!\!\!.
\end{align}

On the other hand, for the digital precoder ${\bf{W}}$ at the BS, the SMSE of ${K}$ users will be considered, i.e., $\xi  = \sum\nolimits_{k = 1}^K {{\xi _k}}  = {\mathop{\mathbb{E}}\nolimits} \left [\left\| {{\bf{\hat x}} - {\bf{x}}} \right\|_2^2 \right]$.
Similar to (\ref{equ:eq2}), $\xi$ can be expressed as
\begin{align}
\label{equ:eq3}
\xi  &= {\mathop{\rm tr}\nolimits} \left ({\bf{V}}^H{{\bf{H}}_{\rm eff}}{\bf{W}}{{\bf{W}}^H}{\bf{H}}_{\rm eff}^H{{\bf{V}}} \right) - {\mathop{\rm tr}\nolimits} \left ({\bf{V}}^H{{\bf{H}}_{\rm eff}}{\bf{W}} \right)\notag\\
 & - {\mathop{\rm tr}\nolimits} \left ({{\bf{W}}^H}{\bf{H}}_{\rm eff}^H{{\bf{V}}} \right) + {\frac{{{\sigma ^2}}}{{{\gamma ^2}}}}{\mathop{\rm tr}\nolimits} \left ({\bf{V}}^H{{\bf{M}}^H}{\bf{M}}{{\bf{V}}} \right) + K{N_s}.
\end{align}
Consider the partial derivative of ${\bf{W}}$ with respect to ${\xi}$ and let it equal 0, the optimal ${\bf{W}}$ can be
\begin{align}
\label{equ:w}
{\bf{W}} = {\left ({\bf{H}}_{\rm eff}^H{{\bf{V}}}{\bf{V}}^H{{\bf{H}}_{\rm eff}} \right)^{ - 1}}{\bf{H}}_{\rm eff}^H{{\bf{V}}}.
\end{align}
From (\ref{equ:v}) and (\ref{equ:w}), we can observe that the solutions of ${{\bf V}_k}$ and ${\bf W}$ depend on each other. To decouple these two solutions,
we first obtain an initial solution of V, denoted by ${{\bf{V}}_{{\rm{ini}}}} = {\rm{diag}}\left( {{\bf{V}}_{{\rm{ini}}}^1,{\bf{V}}_{{\rm{ini}}}^2, \cdots ,{\bf{V}}_{{\rm{ini}}}^K} \right)$, where each ${\bf{V}}_{{\rm{ini}}}^k \in {{\mathbb{C}}^{{N_s} \times {N_s}}}$ ${\left( {1 \le k \le K} \right)}$ can be an arbitrary unitary matrix. Moreover, we can obtain ${\bf W}$ by substituting the initial combiner ${\bf V_{\rm ini}}$ into (9), and then obtain ${\bf V}_k$ according to (7).
Note that different ${\bf V_{\rm ini}}$'s can lead to different digital precoder/combiner solutions \{${\bf W}$, ${\bf V}$\}, but these different solutions \{${\bf W}$, ${\bf V}$\} are equivalent. Specifically, we
first introduce the {\textbf {Lemma 1}} as follows.

\begin{figure*}[t]
     \centering
     \includegraphics[width=18cm, keepaspectratio]%
     {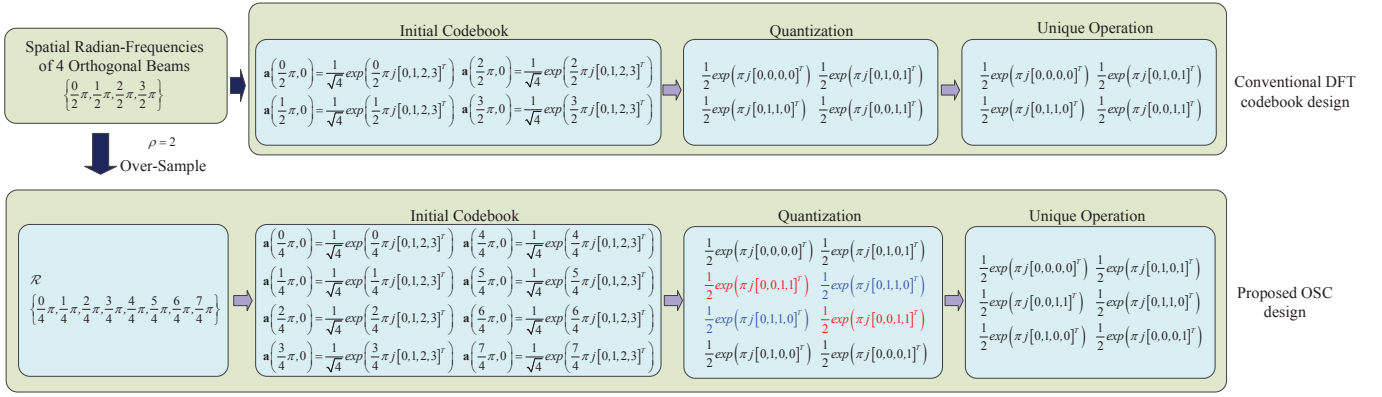}
     \caption{An example of the proposed OSC design, where a 4-element ULA with  $\rho = 2$ and $q = 1$ bit is considered.} 
     \label{fig:fig5}
\end{figure*}

\begin{lemma}
Assume that the matrix ${\bf{A}} = {\left[ {{\bf{A}}_1^T, \cdots ,{\bf{A}}_N^T} \right]^T}$ is invertible. Then
\begin{align}
\label{equ:lemma 1}
{{\bf{A}}_i}{\left( {{{\bf{A}}^H}{\bf{A}}} \right)^{ - 1}}{\bf{A}}_j^H \!=\! \delta \left( {i \!-\! j} \right){\bf{I}},1 \!\le\! i \!\le\! N,1 \!\le\! j \!\le\! N.
\end{align}
\end{lemma}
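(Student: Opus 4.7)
The plan is to reduce the expression to a simple block-selection identity by exploiting invertibility of $\mathbf{A}$. Since $\mathbf{A}$ is square and invertible, $\mathbf{A}^H\mathbf{A}$ is invertible with
\begin{equation}
(\mathbf{A}^H\mathbf{A})^{-1} = \mathbf{A}^{-1}(\mathbf{A}^H)^{-1},
\end{equation}
so the two outer $\mathbf{A}$-factors can be made to cancel cleanly against $\mathbf{A}_i$ and $\mathbf{A}_j^H$ once the row-block structure is made explicit.

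To make that structure explicit, I would introduce a block-selection matrix $\mathbf{S}_i$ (a row of zero blocks with an identity block in the $i$-th slot) so that $\mathbf{A}_i = \mathbf{S}_i\mathbf{A}$ and, by conjugate transpose, $\mathbf{A}_j^H = \mathbf{A}^H\mathbf{S}_j^T$. Substituting these into the left-hand side of (\ref{equ:lemma 1}) gives
\begin{equation}
\mathbf{A}_i(\mathbf{A}^H\mathbf{A})^{-1}\mathbf{A}_j^H = \mathbf{S}_i\mathbf{A}\mathbf{A}^{-1}(\mathbf{A}^H)^{-1}\mathbf{A}^H\mathbf{S}_j^T = \mathbf{S}_i\mathbf{S}_j^T,
\end{equation}
so the whole statement collapses to verifying that $\mathbf{S}_i\mathbf{S}_j^T = \delta(i-j)\mathbf{I}$. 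This last identity is immediate from the definition of the block-selection matrices: distinct blocks occupy disjoint row ranges, so their inner product vanishes, while $\mathbf{S}_i\mathbf{S}_i^T$ returns the identity of the appropriate block size.

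The main obstacle, such as it is, is conceptual rather than computational: one has to notice that invertibility of $\mathbf{A}$ is exactly what allows $(\mathbf{A}^H\mathbf{A})^{-1}$ to split as $\mathbf{A}^{-1}(\mathbf{A}^H)^{-1}$, which in turn is what lets the $\mathbf{A}$ factors annihilate without leaving any residual pseudo-inverse terms. Once that observation is in place, the row-block projector rewriting $\mathbf{A}_i = \mathbf{S}_i\mathbf{A}$ reduces the claim to the orthogonality of the $\mathbf{S}_i$'s, and no further calculation is needed.
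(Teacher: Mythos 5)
Your proof is correct and is essentially the paper's argument: both hinge on invertibility of $\mathbf{A}$ giving $(\mathbf{A}^H\mathbf{A})^{-1}=\mathbf{A}^{-1}(\mathbf{A}^H)^{-1}$, so that $\mathbf{A}(\mathbf{A}^H\mathbf{A})^{-1}\mathbf{A}^H=\mathbf{I}$, after which the claim is just reading off the $(i,j)$ block of the identity. Your selection matrices $\mathbf{S}_i$ merely make that block extraction explicit (and arguably cleaner than the paper's loose ``diag'' notation), but the route is the same.
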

\begin{proof}
Due to the invertibility of ${\bf A}$, we can obtain
${\bf{A}}{\left( {{{\bf{A}}^H}{\bf{A}}} \right)^{ - 1}}{{\bf{A}}^H} = {\bf{A}}{{\bf{A}}^{ - 1}}{\left( {{{\bf{A}}^H}} \right)^{ - 1}}{{\bf{A}}^H} = {\bf{I}}$.
Meanwhile,
\begin{align}
\begin{array}{l}
{\bf{A}}{\left( {{{\bf{A}}^H}{\bf{A}}} \right)^{ - 1}}{{\bf{A}}^H} = {\rm{diag\big\{ }}{{\bf{A}}_1}{\left( {{{\bf{A}}^H}{\bf{A}}} \right)^{ - 1}}{\bf{A}}_1^H,\\
\qquad\qquad\qquad\quad\ \cdots ,{{\bf{A}}_N}{\left( {{{\bf{A}}^H}{\bf{A}}} \right)^{ - 1}}{\bf{A}}_N^H{\rm{\big\} }}.
\end{array}
\end{align}

Hence, (\ref{equ:lemma 1}) is proven, where $\delta(i-j)=1$ only if $j=i$,  otherwise $\delta(i-j)=0$.
\end{proof}
Given $\{{\bf V}_{\rm ini}^k\}_{k=1}^K$ being arbitrary unitary matrices, ${\bf{W}}\!\!=\!\!{\left( {{\bf{H}}_{{\rm{eff}}}^H{{\bf{H}}_{{\rm{eff}}}}} \right)^{ - 1}}{\bf{H}}_{{\rm{eff}}}^H{\bf{V}}_{\rm ini}$ and ${{\bf{W}}_k} \!\!=\!\!{\left( {{\bf{H}}_{{\rm{eff}}}^H{{\bf{H}}_{{\rm{eff}}}}} \right)^{ - 1}}{\left( {{\bf{H}}_{{\rm{eff}}}^k} \right)^H}{{\bf{V}}_{{\rm{ini}}}^k}$ according to (\ref{equ:w}). Then we have ${\bf{W}}{{\bf{W}}^H} = {\left( {{\bf{H}}_{{\rm{eff}}}^H{{\bf{H}}_{{\rm{eff}}}}} \right)^{ - 1}}$.
Consider the invertibility of matrix ${\bf H}_{\rm eff}$ and {\textbf {Lemma 1}},
\begin{align}
\label{equ:lemma 20}
{\bf{V}}_k^H
 & \mathop \approx \limits^a {\bf{W}}_k^H{\left( {{\bf{H}}_{{\rm{eff}}}^k} \right)^H}{\left( {{\bf{H}}_{{\rm{eff}}}^k{\bf{W}}{{\bf{W}}^H}{{\left( {{\bf{H}}_{{\rm{eff}}}^k} \right)}^H}} + {\frac{{{\sigma ^2}}}{{{\gamma ^2}}}}{\bf I}\right)^{ - 1}}\notag\\
 &= \frac{{{\gamma ^2}}}{{{\gamma ^2} + {\sigma ^2}}}{\bf{W}}_k^H{\left( {{\bf{H}}_{{\rm{eff}}}^k} \right)^H}
 = \frac{{{\gamma ^2}}}{{{\gamma ^2} + {\sigma ^2}}}{\left( {{\bf{V}}_{{\rm{ini}}}^k} \right)\!^H},
\end{align}
where the approximation (a) is due to the asymptotic orthogonality  ${\bf M}_k^H{\bf M}_k \approx {\bf I}$ when $N_r$ is large \cite{{8}}. Given $\mu  = \frac{{{\gamma ^2}}}{{{\gamma ^2} + {\sigma ^2}}}$ and
$\forall k$,
\begin{align}
\label{equ:lemma 21}
{\bf{V}}_k^H{\bf{H}}_{{\rm{eff}}}^k{{\bf{W}}_k} &\approx \mu{\left( {{\bf{V}}_{{\rm{ini}}}^k} \right)\!^H}{\bf{H}}_{{\rm{eff}}}^k{\left( {{\bf{H}}_{{\rm{eff}}}^H{{\bf{H}}_{{\rm{eff}}}}} \right)^{ - 1}}{\left( {{\bf{H}}_{{\rm{eff}}}^k} \right)^H}{\bf{V}}_{{\rm{ini}}}^k\notag\\
 &= \mu{\left( {{\bf{V}}_{{\rm{ini}}}^k} \right)\!^H}{\bf{V}}_{{\rm{ini}}}^k \!=\!\mu{{\bf{I}}_{{N_s}}}.
\end{align}
Similarly, given ${1 \le m \le K}$, ${1 \le k \le K}$, and $m \ne k$, ${{\bf{V}}_m^H{\bf{H}}_{{\rm{eff}}}^m{{\bf{W}}_k} \approx {\bf{0}}}$.
Hence we can have ${{\bf{V}}^H}{{\bf{H}}_{{\rm{eff}}}}{\bf{W}}\!\approx\!\mu {{\bf{I}}_{K{N_s}}}$, which does not depend on $\{{\bf V}_{\rm ini}^k\}_{k=1}^K$ if $\{{\bf V}_{\rm ini}^k\}_{k=1}^K$ are arbitrary unitary matrices.
\subsection{Analog Precoding Design}
By substituting ${{\bf{V}}^H}{{\bf{H}}_{{\rm{eff}}}}{\bf{W}}\!\approx\!\mu {{\bf{I}}_{K{N_s}}}$ into (\ref{equ:eq3}),
the SMSE $\xi$ can be further expressed as
\begin{align}
\label{equ:lemma 23}
\!\!\!\xi  &\approx \left( {{\mu ^2} - 2\mu  + 1} \right)K{N_s} + \frac{{{\sigma ^2}}}{{{\gamma ^2}}}{\rm{tr}}\left( {{{\bf{V}}^H}{{\bf{M}}^H}{\bf{MV}}} \right)\notag\\
 &\mathop=\limits^b \!\left( {{\mu ^2} \!-\! 2\mu  \!+\! 1} \right)K{N_s} \!+\! \frac{{{\sigma ^2}}}{{{P_t}}}{\rm{tr}}\left( {{{\bf{F}}}{\bf{W}}{{\bf{W}}^H}{\bf{F}}^H} \right){\rm{tr}}\left( {{{\bf{V}}^H}{{\bf{M}}^H}{\bf{MV}}} \right)\notag\\
&\mathop \approx \limits^c\!\left( {{\mu ^2}\! -\! 2\mu  \!+ \!1} \right)K{N_s} \!+ \! \frac{{K{N_s}{\mu}^2{\sigma ^2}}}{{{P_t}}}{\rm{tr}}\left( {{{\left( {{\bf{H}}_{{\rm{eff}}}^H{{\bf{H}}_{{\rm{eff}}}}} \right)}^{ \!- 1}}} \right),
\end{align}
where the equation (b) is due to the power constraint in (\ref{equ:digital1}) and the approximation (c) is due to ${{\bf{V}}_k}{\bf{V}}_k^H \approx {\mu}^2{{\bf{I}}_{{N_s}}}$ and ${\bf{V}}{{\bf{V}}^H} \approx {\mu}^2{{\bf{I}}_{K{N_s}}}$ based on (\ref{equ:lemma 20}). Note that ${{\bf{F}}^H}{\bf{F}} \approx {\bf{I}}$ (${{\bf{M}}^H}{\bf{M}} \approx {\bf{I}}$) when $N_t$ ($N_r$) is large \cite{{8}}. 

In (\ref{equ:lemma 23}), when signal-to-noise ratio (SNR) is high with ${\sigma ^2}  \to 0$, we can obtain $\mu  \to 1$ and $\xi  \approx \frac{{K{N_s}{\sigma ^2}}}{{{P_t}}}{\rm{tr}}\left( {{{\left( {{\bf{H}}_{{\rm{eff}}}^H{{\bf{H}}_{{\rm{eff}}}}} \right)}^{ - 1}}} \right)$. Note that when SNR is low where the noise dominates the performance, i.e., ${\sigma ^2} \to \infty$, and then we have $\mu  \to 0$ and $\xi  \approx K{N_S}$, which is a constant. Hence, we consider the analog precoding design at high SNR. Moreover, we consider the singular value decomposition (SVD) of ${\bf H}_{\rm eff}$, i.e., ${{\bf{H}}_{{\rm{eff}}}} = {\bf{U\Sigma }}{{\bf{D}}^H}$, where ${\bf{\Sigma }} = {\rm{diag}}\left\{ {{\sigma _1},{\sigma _2}, \cdots ,{\sigma _{K{N_s}}}} \right\}$ with the descent order, and then $\xi$ can be rewritten as
\begin{align}
\xi  \approx \frac{{{KN_s}{\sigma ^2}}}{{{P_t}}}{\rm{tr}}\left( {{{\bf{\Sigma }}^{ - 2}}} \right)
 = \frac{{{KN_s}{\sigma ^2}}}{{{P_t}}}\sum\nolimits_{i = 1}^{K{N_s}} {\sigma _i^{ - 2}}.
\end{align}
To further optimize the SMSE, the analog precoding design can be formulated as the following optimization problem
\begin{align}
\label{equ:problem1}
&\mathop {\min }\limits_{\left\{ {{{\bf{M}}_k}} \right\}_{k = 1}^K,{\bf{F}}} \sum\nolimits_{i = 1}^{K{N_s}} {\sigma _i^{ - 2}} \notag\\
&{\rm{s}}.{\rm{t}}.\ {\rm{angle}}\left( {{{\left( {{{\bf{M}}_k}} \right)}_{i,j}}} \right) \in {{\cal Q}_r},{\rm{angle}}\left( {{{\bf{F}}_{i,j}}} \right) \in {{\cal Q}_t},\forall i,j,k,
\end{align}
\begin{algorithm}[b]
\caption{Proposed OSC Design}
\label{alg:alg0}
\begin{algorithmic}
\renewcommand{\algorithmicrequire}{\textbf{Input:}}
\renewcommand\algorithmicensure {\textbf{Output:} }
\REQUIRE {Number of antennas ${N=N_yN_z}$, quantization bits of phase shifters ${q}$, and the over-sampling factor $\rho$.}
\ENSURE{Predefined codebook ${\cal{D}}$.}
\\${\kern -16pt}$ \text{1.}~{Obtain the candidate set of practical quantized phases ${\cal Q} = \left\{ {0,\frac{{2\pi }}{{{2^q}}}, \cdots ,\frac{{2\pi ({2^q} - 1)}}{{{2^q}}}} \right\}$ and the sets of over-sampling spatial radian-frequencies associated with the azimuth and elevation angles ${{\cal R}_y} = \left\{ {0,\frac{{2\pi }}{{\rho N_y }}, \cdots ,\frac{{2\pi (\rho N_y - 1)}}{{\rho N_y }}} \right\}$, ${{\cal R}_z} = \left\{ {0,\frac{{2\pi }}{{\rho N_z }}, \cdots ,\frac{{2\pi ( \rho N_z - 1)}}{{\rho N_z }}} \right\}$.}
\\${\kern -14pt}$ \text{2.}~{Obtain the codebook ${\cal D}\!\! = \!\!\left\{ {{\bf{a}}\left( {{w_y},{w_z}} \right)|{w_y} \in {{\cal R}_y},{w_z}\in {{\cal R}_z}} \right\}$.}
\\${\kern -22pt}$ \text{3.}~{Quantize phases of all candidates in ${\cal D}$ as ${\mathop{\rm angle}\nolimits} ({\bf{a}}{({w_y},\!{w_z})_{i,j}})\!\! =\!\! \mathop {\arg \min }\limits_{\alpha  \in {\cal Q}} \left\| {\rm angle}{({{\bf{a}}{{\left( {{w_y},\!{w_z}} \right)}_{i,j}})}\! -\! \alpha } \right\|_2^2$, where ${{\bf{a}}\left( {{w_y},{w_z}} \right)}$ is one specific candidate in ${\cal D}$, $1 \le i \le N$, and $j=1$.}
\\${\kern -14pt}$ \text{4.}~{${\cal D}={\rm unique}( \cal D)$ to make candidates in $\cal D$ be unique.}
\end{algorithmic}
\end{algorithm}
\!\!where ${\cal Q}_r$ and ${\cal Q}_t$ are the candidate sets of quantized phases due to the finite resolution of the phase shifters. From (\ref{equ:problem1}), we observe that given $\sum\nolimits_{i = 1}^{K{N_s}} {\sigma _i^{ - 2}}$, ${\bf H}_{\rm eff}$ with equal singular values will reach the minimization of (\ref{equ:problem1}). This enlightens us to design the analog precoder/combiner with the condition number ${\rm{cond}}\left( {\bf H}_{\rm eff} \right)$ as small as possible, where ${\rm{cond}}\left( {\bf H}_{\rm eff} \right) = {\sigma _1}/{\sigma _{KN_s}}$. On the other hand, given the equal singular values, the minimization of (\ref{equ:problem1}) is equivalent to
\begin{align}
\label{equ:problem2}
&\mathop {\max }\limits_{\left\{ {{{\bf{M}}_k}} \right\}_{k = 1}^K,{\bf{F}}} \sum\nolimits_{i = 1}^{K{N_s}} {\sigma _i^{2}} \notag\\
&{\rm{s}}.{\rm{t}}.\ {\rm{angle}}\left( {{{\left( {{{\bf{M}}_k}} \right)}_{i,j}}} \right) \in {{\cal Q}_r},{\rm{angle}}\left( {{{\bf{F}}_{i,j}}} \right) \in {{\cal Q}_t},\forall i,j,k.
\end{align}
Moreover, even the optimization problem (\ref{equ:problem2}) can also be challenging, which requires the prohibitively high complexity with ${\cal O}\left( {{2^{K{B_r}{M_r}{N_r}{B_t}{M_t}{N_t}}}} \right)$, where $B_r$ and $B_t$ are the quantization
bits of phase shifters at the users and BS, respectively.
While if ${\bf H}_{\rm eff}$ is a diagonally-dominant matrix, the optimization problem (\ref{equ:problem2}) can be approximated to $\mathop {{\rm{max}}}\limits_{{\{ {{{\bf{M}}_k}} \}_{k = 1}^K}\!,{\bf{F}}} \sum\nolimits_{i = 1}^{K{N_s}} \!\!{\left| \left({{\bf{H}}_{{\rm{eff}}}} \right)_{i,i}\right|\!^2}$.
Based on the analysis above, we consider the following multi-object optimization problem
\begin{align}
\label{equ:problem3}
\!\!\!&\mathop {{\rm{max}}}\limits_{\left\{ {{{\bf{M}}_k}} \right\}_{k = 1}^K,{\bf{F}}} {\bf{g}}\left( {\left\{ {{{\bf{M}}_k}} \right\}_{k = 1}^K,{\bf{F}}} \right) \!\!=\!\! {\left[ {\sum\limits_{i = 1}^{K{N_s}} {{{\left| \left({{\bf{H}}_{{\rm{eff}}}}\right)_{i,i} \right|}}}\!^2 , - {\rm{cond}}\left( {{{\bf{H}}_{{\rm{eff}}}}} \right)} \right]^T}\notag\\
&{\rm s.t.}\  {\bf{M}}_k^{\left (m \right)}  \in {\cal D}_{k,r}, {\bf{F}}^{\left (m \right)}  \in {\cal D}_t, \forall m,k,
\end{align}
\begin{algorithm}[b]
\caption{Proposed JAPC Design}
\label{alg:alg1}
\begin{algorithmic}
\renewcommand{\algorithmicrequire}{\textbf{Input:}}
\renewcommand\algorithmicensure {\textbf{Output:} }
\REQUIRE {Maximum correlation factor $\beta$, BS codebook ${\cal{D}}_t$, user codebook ${\cal{D}}_{k,r}$, and channel ${\bf H}_k$ for ${1 \le k \le K}$.}
\ENSURE{Analog precoder ${\bf F}$ and analog combiner ${\bf M}_k$, $\forall k$.}
\\${\kern -14pt}$ \text{1.}~{Initialization: ${\cal K} = \{ 1,2, \cdots ,K\}$, ${\bf{F}} = {{\bf{M}}_k} = \Phi, \forall k$, and $\Phi$ denotes empty matrix;}
\\${\kern -14pt}$ \text{2.}~{\textbf{for} $i_{\rm iter} = 1:K{M_r}$}
\\${\kern -14pt}$ \text{3.}~~~{$\{ {k^ * },{\bf{a}}_r^ * ,{\bf{a}}_t^ * \}  = \arg \mathop {\max }\limits_{k \in {\cal K},{{\bf{a}}_r} \in {{\cal D}_{k,r}},{{\bf{a}}_t} \in {{\cal D}_t}} \left\| {{\bf{a}}_r^H{{\bf{H}}_k}{{\bf{a}}_t}} \right\|_2^2$;}
\\${\kern -14pt}$ \text{4.}~~~{${{\bf{M}}_{k^*}} = [{{\bf{M}}_{k^*}}|{\bf{a}}_r^ * ]$, ${{\bf{F}}_{k^*}} = [{{\bf{F}}_{k^*}}|{\bf{a}}_t^ * ]$;}
\\${\kern -14pt}$ \text{5.}~~~{${{\cal D}_{{k^*}\!,r}} \!= \!{\rm setdiff}\left({{\cal D}_{{k^*},r}}, \left\{{\bf{a}}_r| {{\bf{a}}_r} \!\in\! {{\cal D}_{{k^*},r}}, \big| {{{\left( {{\bf{a}}_r^ * } \right)}\!^H}{{\bf{a}}_r}} \big| \!\ge \! \beta \right\} \right)$;}
\\${\kern -14pt}$ \text{6.}~~~{${{\cal D}_{t}} \!= \!{\rm setdiff}\left({{\cal D}_{t}}, \left\{ {{\bf{a}}_t}|{{\bf{a}}_t} \in {{\cal D}_{t}}, \big| {{{\left( {{\bf{a}}_t^ * } \right)}^H}{{\bf{a}}_t}} \big| \!\ge\! \beta \right\} \right)$;}
\\${\kern -14pt}$ \text{7.}~~~{${\cal K}= {\rm setdiff}\left({\cal K}, \!\left\{ {k^ *} \right\}\right)$ when the number of columns of \\~~${\bf{M}}_{k^*}$ equals $M_r$;}
\\${\kern -14pt}$ \text{8.}~{\textbf{end for}}
\\${\kern -14pt}$ \text{9.}~{${\bf{F}} = [{{\bf{F}}_1},{{\bf{F}}_2}, \cdots ,{{\bf{F}}_K}]$.}
\end{algorithmic}
\end{algorithm}
\!\!where ${\cal D}_{k,r}$ and ${\cal D}_{t}$ are the predefined codebooks for ${\bf M}_k$ and ${\bf F}$, respectively.
Thus, the computational complexity to acquire the solution to (\ref{equ:problem3}) can be reduced to ${\cal O}\left(K {{{\left| {\cal D}_{k,r} \right|}_c}} {{{\left| {\cal D}_t \right|}_c}}\right)$, and ${{\left| {\cdot} \right|}_c}$ denotes the cardinality of a set.
To ensure ${\bf H}_{\rm eff}$ to be a diagonally-dominant matrix, we propose that each candidate from ${\cal D}_{k,r}$ and ${\cal D}_t$ has the form of UPA response vector as shown in (\ref{equ:model1}), i.e.,
\begin{align}
\label{equ:vec}
{\bf{a}}\left( {{w_y},{w_z}} \right) &= \frac{1}{{\sqrt {N} }}\big[1, \cdots ,{e^{j\left( {n{w_y} + m{w_z}} \right)}}, \cdots , \notag \\&\
{e^{j\left( {\left( {N_{y} - 1} \right){w_y} + \left( {N_{z} - 1} \right){w_z}} \right)}}{\big]^T},
\end{align}
where ${N_y N_z = N}$, ${w_y}$ and ${w_z}$ denote the spatial frequencies associated with the azimuth and elevation angles, respectively.

The conventional 2D-DFT codebook has the limited spatial resolution. To further improve the spatial resolution, we design an OSC as shown in \textbf{Algorithm \ref{alg:alg0}}, where the set ${\rm unique}({\cal D})$ has the same candidates as the set ${\cal D}$ without repetitions.
Specifically, we first obtain an OSC ${\cal D}$ given the over-sampling factor ${\rho}$ (steps 1$\sim$2). For each candidate in ${\cal D}$, we quantize the phases of all elements by considering the quantization bits of phase shifters ${q}$ (step 3), and finally make candidates in ${\cal D}$ be unique (step 4). Conventional 2D-DFT codebook is a special case of the proposed OSC for UPA when ${\rho} = 1$. When ${\rho}$ becomes large, the proposed codebook can achieve the better spatial resolution as illustrated in Fig. \ref{fig:fig5}. On the other hand, the better spatial resolution indicates the stronger correlation among candidates in ${\cal D}$. This implies that directly acquiring analog precoder/combiner ${{\bf M}_k}$ and ${{\bf F}_k}$ to maximize $\sum\nolimits_{i = 1}^{K{N_s}} {{\left| {\left({\bf{H}}_{{\rm{eff}}}\right)_{i,i}} \right|}^2}$ will lead ${{\bf H}^k_{\rm eff}}$ to be ill-conditioned or even rank-insufficient.

To solve the optimization problem (\ref{equ:problem3}), we further propose a heuristic JAPC algorithm as shown in \textbf{Algorithm \ref{alg:alg1}}, where the set ${\rm setdiff}({\cal A}, {\cal B})$ has all candidates in ${\cal A}$ that are not in ${\cal B}$.
Specifically, in each iteration, we first
acquire the user index $k^*$ and the associated \{${\bf a}_r^*$, ${\bf a}_t^*$\} that maximize the diagonal elements of effective channels (step 3) to be part of the analog precoder/combiner matrices ${\bf M}_k$ and ${\bf F}_k$, respectively (step 4).
To prevent ${\bf H}_{\rm eff}$ to be ill-conditioned, the candidates in ${\cal D}_{k,r}$ and ${\cal D}_t$ having strong correlation with the selected \{${\bf a}_r^*$, ${\bf a}_t^*$\} in \textbf{step 3} are removed (steps 5$\sim$6). Finally, if the $k^*$-th user's analog combiner matrix  ${\bf M}_k^*$ is completed, the index $k^*$ is removed from ${\cal K}$ (step 7).
\begin{figure}
     \centering
     \includegraphics[width=8.5cm, keepaspectratio]%
     {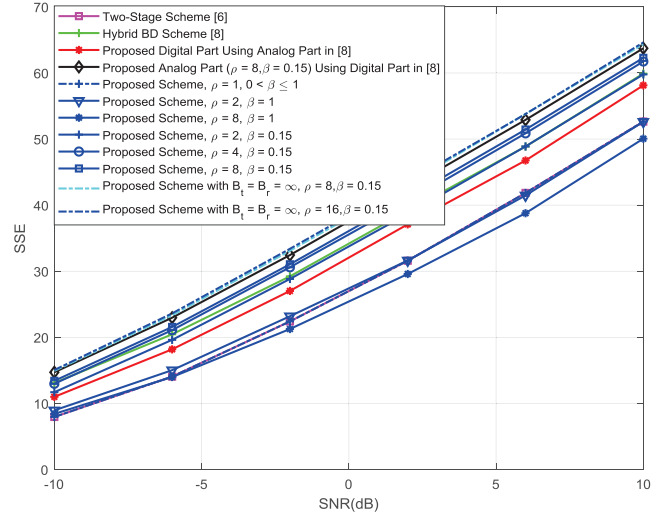}
     \caption{SSE achieved by different schemes in a multi-user MIMO system, where ${K = 8}$, ${M_t} = 8$, ${M_r} = {N_s} = 1$.}
     \label{fig:s1}
\end{figure}
\begin{figure}
     \centering
     \includegraphics[width=8.8cm, keepaspectratio]%
     {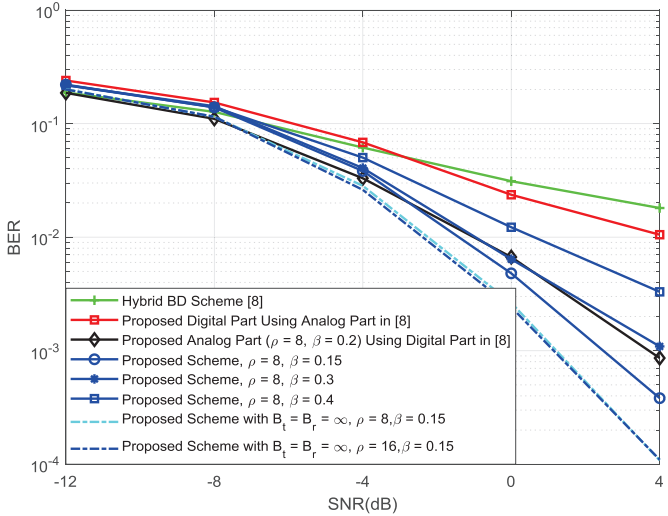}
     \caption{BER achieved by different schemes in a multi-user MIMO system, where ${K = 2}$, ${M_t} = 4$, ${M_r} = {N_s} = 2$.}
     \label{fig:b1}
\end{figure}
\section{Simulation Results}
The simulation parameters are shown as follows: BS and users are equipped with $8 \times 8$ and $4 \times 4$ UPAs, respectively, ${B_t} = 3$ bits, ${B_r} = 2$ bits, and we consider the 16-QAM modulation.
For the mmWave channel model, we consider ${N_c} = 8$ and ${N_p} = 10$,
azimuth/elevation AoAs and AoDs follow the uniform distribution ${\cal U}\left[ { - \pi / 2,\pi  / 2} \right]$,
 and all the angle spreads are ${7.5^ \circ }$ \cite{{6},{8}}. We will investigate the performance of the proposed scheme by comparing it with the two-stage hybrid precoding scheme \cite{{4}}, the hybrid BD scheme \cite{{6}}, the proposed digital part using analog part in \cite{{6}}, and our proposed analog part using digital part (i.e., BD precoding) in \cite{{6}}. Besides, the proposed scheme with ${B_t = B_r = \infty }$ (i.e., the step 3 and step 4 in Algorithm 1 are removed) are simulated to investigate the performance gap compared to the optimal schemes.


Fig. \ref{fig:s1} and Fig. \ref{fig:b1} investigate the SSE and BER performance, respectively.
In Fig. \ref{fig:s1}, the proposed scheme with $\rho = 8$ and $\beta = 0.15$ has the better SSE than the two-stage scheme \cite{{4}} and the hybrid BD scheme \cite{{6}}. Especially, the proposed OSC-based JAPC design with the digital part in \cite{{6}} outperforms the hybrid BD scheme \cite{{6}}.
Besides, the SSE of the proposed scheme with $\beta = 1$ becomes worse when $\rho$ increases, since directly using solution to (\ref{equ:problem3}) with OSC will lead large ${\rm{cond}}\left( {{{\bf{H}}_{{\rm{eff}}}}} \right)$. While the SSE of the proposed scheme with $\beta = 0.15$ becomes better when $\rho$ increases, since larger $\rho$ can achieve better spatial resolution and the proposed JAPC design with a suitable $\beta$ can guarantee small ${\rm{cond}}\left( {{{\bf{H}}_{{\rm{eff}}}}} \right)$.
Note that when $\rho = 1$, the OSC is reduced to conventional 2D-DFT codebook, and the JAPC design with $\rho = 1$ for $0 < \beta \le 1$ is equivalent since elements in 2D-DFT codebook are orthogonal.
In Fig. \ref{fig:b1}, we observe that when BER $=10^{-2}$ is considered, the proposed scheme ($\rho = 8$ and $\beta = 0.15$), the proposed min-SMSE based digital part using analog part in \cite{{6}}, and the proposed OSC-based JAPC design using analog part in \cite{{6}} have over 5.1 dB, 1 dB, and 5 dB SNR gains than the hybrid BD scheme, respectively.
Note that when ${B_t = B_r  = \infty}$, the proposed scheme with $\rho = 8$ and the proposed scheme with $\rho = 16$ have very similar BER and SSE performance, which indicates OSC with $\rho = 8/16$ can have sufficiently large spatial resolution and these two curves can be considered as the optimal performance with infinite spatial resolution of codebook and infinite quantization of phase shifters. The BER and SSE performance of the proposed scheme with $\rho = 8$ and $\beta = 0.15$ still has a gap compared to the optimal schemes due to the limited quantization of phase shifters.
\section{Conclusions}
This letter has proposed an OSC-based hybrid min-SMSE precoding scheme to support multi-user and multi-stream transmission for each user in mmWave 3D-MIMO systems.
Specifically, we design the digital precoder/combiner to improve the BER performance based on the min-SMSE criterion. Moreover, in the analog part, we propose an OSC and the associated JAPC design by further reducing the SMSE.
Finally, simulation results show the proposed scheme can achieve the better performance than the conventional schemes.


\end{document}